\theoremstyle{plain}
 \newtheorem{prop}{Proposition}
 \newtheorem{lemma}{Lemma}
\def\BibTeX{{\rm B\kern-.05em{\sc i\kern-.025em b}\kern-.08em
    T\kern-.1667em\lower.7ex\hbox{E}\kern-.125emX}}
\begin{document}

\title{Embedding Jamming Attacks into Physical Layer Models in Optical Networks}
\author[1]{Mounir Bensalem} %\thanks{A.A@university.edu}}
\author[2]{\'Italo Brasileiro} %\thanks{A.B@university.edu}}
%\thanks{B.A@university.edu}}
\author[2]{Andr\'e Drummond} 
\author[1]{Admela~Jukan} %\thanks{B.B@university.edu}}
\affil[1]{Technische Universit\"at Braunschweig, Germany}
\affil[2]{ University of Brazilia, Brazil}

%\renewcommand\Authands{ and }
%\author{\IEEEauthorblockN{1\textsuperscript{st} Given Name Surname}
%\IEEEauthorblockA{\textit{dept. name of organization (of Aff.)} \\
%\textit{name of organization (of Aff.)}\\
%City, Country \\
%email address}
%\and
%\IEEEauthorblockN{2\textsuperscript{th} Given Name Surname}
%\IEEEauthorblockA{\textit{dept. name of organization (of Aff.)} \\
%\textit{name of organization (of Aff.)}\\
%City, Country \\
%email address}
%}

\maketitle

\begin{abstract}
Optical networks are prone to physical layer attacks, in particular the insertion of high jamming power. In this paper, we present a study of jamming attacks in elastic optical networks (EON) by embedding the jamming into the physical layer model, and we analyze its impact on the blocking probability and slots utilization. We evaluate our proposed model using a single link and a network topology and we show that for  in-band- jamming, the slots utilization decreases with the increase of jamming power, and becomes null when the jamming power is higher than 3 dB, while for out-of-band jamming, the impact is maximal for a specific jamming power, 1.75 dB in our simulation. Considering multiple positions of attackers, we attained the highest blocking probability 32\% for a specific jamming power 2 dB. We conclude that the impact of jamming depends on attacker positions as well as the jamming power. 
\end{abstract}
%
%\begin{IEEEkeywords}
%component, formatting, style, styling, insert
%\end{IEEEkeywords}

\section{Introduction} 

Elastic Optical Networks (EON) are highlighted as a possible infrastructure for large core networks, which make it critical to assure its security. Generally, optical networks are  vulnerable  to different  types  of  attacks,  including  power  jamming  attacks \cite{medard1998attack}. This physical attack consists on the insertion of high power signal over a specific frequency chosen by the attacker in order to decrease the signal-to-noise ratio (SNR). The jamming attacks are classified into two main types:  \textit{in-band jamming} and \textit{out-of-band jamming}. The in-band jamming considers the insertion of the high  power  signal over a frequency  within the  transmission  window to disrupt connections that use the jammed slots. The out-of-band jamming considers the insertion of the high  power  signal out  of  the transmission  band to disrupt connections that use the neighboring slots to the jammed ones. 

%
%- \hl{security issues}\\

%
%\hl{................................}\\
%
%- \hl{jamming attack papers : detection, prevention}\\

Previous research studied jamming attacks using experiments \cite{li2015fast, natalino2018field, jing2012experiment}  or simulations \cite{bensalem2019detecting, yuan2014protection}. Most works are focused on proposing solutions and models for detecting jamming attacks \cite{bensalem2019detecting, natalino2018field, li2018light} or protecting the network against its occurrence \cite{prucnal2009physical, hu2015chaos,li2016fast,singh2017combined}. However, before studying how to detect jamming attacks, or to design prevention mechanism, it is important to go back to the origin of the physical attack and analyze the non-linearity effects added by the insertion of a high power signal. An integration of jamming attacks model with the physical layer model is still missing in the literature. By simply having such integration to the physical layer model, we can derive a new analysis for jamming attacks, which does not exist in the current literature.\\

This work proposes a modeling for jamming attacks, in-band and out-of-band jamming, considering the embedding of the additional signal power into the physical layer model as non-linear impairments. A general physical layer model \cite{johannisson2014modeling} is used as reference for the power insertion, which causes a reduction in the Quality of Transmission (QoT) of the network circuits.\\
Our analysis shows that the proposed modeling reveals the non-linear characteristics of jamming attacks. To understand those characteristics, it is important to observe the impact of in-band and out-of-band jamming on slots utilization and the blocking probability. We studied two different network topologies, a single link and a small network of 6 nodes, considering different values of jamming power. We discussed as well the case of multiple attackers, where each one is using different frequency slots range. The results associated to in-band jamming,  showed that slots utilization decreases with the increase of jamming power, and becomes null when the jamming power is higher than 3 dB. On the other hand, the results observed for out-of-band jamming, which aims to harm neighboring circuits, shows that the impact is maximal for a specific jamming power, 1.75 dB in our simulation, and decreases when we use higher jamming power. Using multiple positions of attackers, we obtained the highest blocking probability 32\% for a specific jamming power 2 dB, as a result, we conclude that the impact of jamming depends on the jamming power as well as attacker positions.

The rest of this paper is organized as follows. Section \ref{sec:jamming} shows the proposed jamming model for EON. Section \ref{sec:evaluation} presents the performance evaluation for the proposed model. Finally, we conclude the paper in Section \ref{sec:conclusion}.

\section{Jamming Attack Model in EON}\label{sec:jamming}
The infrastructure is modeled as a graph $G=(V,E)$, where $V$ represents the set of nodes consisting of transceivers and routing devices, and $E$ the set of optical links. An optical link contains a set of connected fiber spans, such that each span is an optical fiber followed by an erbium-doped fiber amplifier (EDFA). Each link can transmit M slots. We define the center frequency $f_m$, the bandwidth $\Delta f_m$, and the power $P_m$, for each channel $m=1,...,M$.  The studied network can support an enormous number of connections established between different nodes and following a predefined route. A connection $i$ can be associated to a route $r_i$, where $r_i$ is a set of links starting from the source node and ending by the destination node of connection $i$.\\
There are two types of  jamming attacks:  \textit{In-band jamming} in which an attacker inserts a relatively high power signal over either a channel within the transmission window, i.e. the attacker can get access to an optical fiber, bend it, then insert a slightly high power  to the victim channel. \textit{Out-of-band jamming} where the high power signal is inserted out of the transmission band of legitimate data channels, i.e. the attacker is an unauthorized user being able to sneak into the network and access a light source (laser) to launch a signal with higher power than other authorized channels, targeting nonlinearity in the fiber and degradation of Optical Signal to Noise Ratio (OSNR) and Bit Error Rate (BER) of other neighboring channels.\\
Figure \ref{fig:arch1} presents an illustration of the studied jamming attack model in an optical  network, including a transmitter, series of spans (EDFA connected to an optical fiber) and a  receiver. The two type of jamming attack are shown in the figure. The out-of-band jamming is represented by an unauthorized signal inserted by Jammer 1 in the $2^{nd}$ channel with $\epsilon$ dB higher power than the legitimate  channels. The in-band jamming is represented by Jammer $2$ who gets an access to the fiber after multiplexing all signals. Jammer 2 choose a victim channel (ch4) and add $\epsilon$ dB power on top of the legitimate one.
\begin{prop}
The power of a jammed channel $P_J$ is equal to $P+\epsilon$  dB, in both type of attacks (in-band and out-of-band jamming). 
\end{prop}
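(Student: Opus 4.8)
The plan is to verify the claim separately for each of the two attack types fixed in the model setup, and then to observe that both reduce to the same expression. Since the statement is essentially a direct consequence of how Jammer~1 and Jammer~2 are specified, the proof amounts chiefly to making the decibel bookkeeping explicit and checking that the two definitions share a common baseline power.

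First I would treat the in-band case. Here Jammer~2 accesses the fiber after multiplexing and superimposes an additional signal onto a chosen victim channel $m$ whose legitimate power is $P$. By the definition of the attack, the attacker adds $\epsilon$ dB on top of the existing legitimate channel, so the power of the jammed channel is $P_J = P + \epsilon$ dB immediately. Next I would treat the out-of-band case: Jammer~1 inserts an unauthorized signal into an otherwise unused channel with power specified to be $\epsilon$ dB higher than that of the legitimate channels, each carrying power $P$. Hence the injected channel again has power $P_J = P + \epsilon$ dB. Combining the two cases yields the stated identity.

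The main obstacle is less a genuine difficulty than a point of care: the decibel arithmetic. The relation $P_J = P + \epsilon$ dB is a statement on the logarithmic scale, and I would make its linear-power counterpart explicit, namely $P_J^{\mathrm{lin}} = P^{\mathrm{lin}} \cdot 10^{\epsilon/10}$, so that the additive form in dB is unambiguous and consistent with the way the additional jamming signal superimposes on the legitimate one. The only substantive thing to confirm is that both attack definitions refer to the \emph{same} reference power $P$ of the legitimate channels; it is precisely this shared baseline that allows the in-band and out-of-band cases to collapse into a single expression, which is what the proposition asserts.
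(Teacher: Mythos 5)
Your proposal is correct and matches the substance of the paper's argument: the proposition is a direct restatement of how the two attackers are defined (each inserts a signal $\epsilon$ dB above the legitimate channel power $P$), and the paper's own proof simply declares this clear. Your explicit case split and the linear-scale remark $P_J^{\mathrm{lin}} = P^{\mathrm{lin}}\cdot 10^{\epsilon/10}$ merely spell out what the paper leaves implicit, so there is no meaningful difference in approach.
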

\begin{proof}
it is clear.
\end{proof}
\begin{figure}
 \centering
   \includegraphics[scale=0.55]{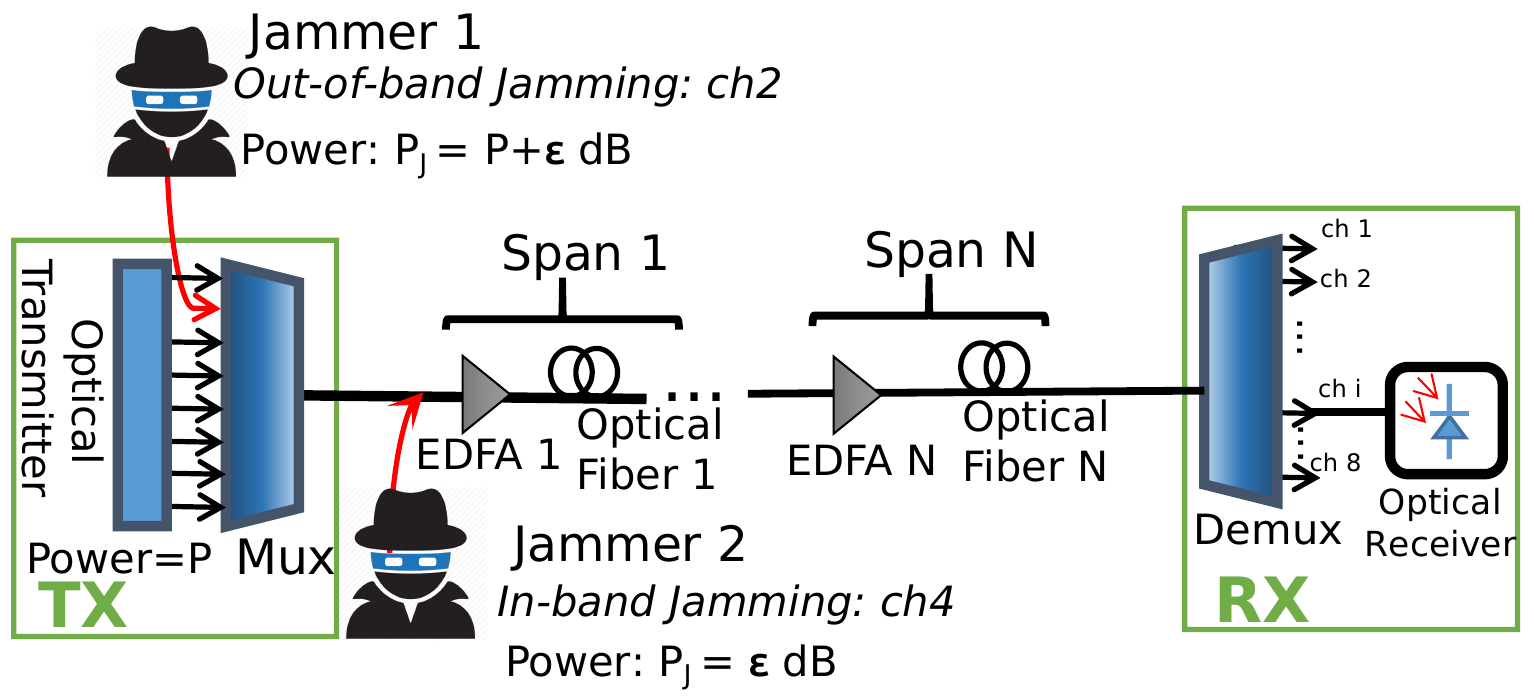}
    %\vspace{-0 cm}
 \caption{An Illustration of In-band and out-of-band Jamming Attack in EON}
\label{fig:arch1}
\vspace{-0.2 cm}
\end{figure}

Therefore, we formulate our model as follows: a jammer $J$ can insert a high power signal in one or more channels in order to disrupt the service. We denote by $M_J \subset M$ the set of jammed channels. We assume that the power of $J$ is $P_J= P + \epsilon$, where $P$ is the power of the rest of channels, and $\epsilon$ is the additional power. \\

The proposed jamming attack aware physical impairment model can be resumed by the following lemma:
\begin{lemma}
The jamming attack aware SNR equation associated to a connection $i$ using a route $r_i$ can be expressed as follows:

\begin{equation}\label{eq:snr}
    SNR_m = \frac{G_m}{G^{ASE} + G^{NLI,s} + G^{J}}
\end{equation}
Where $G_m$ is the signal power spectral density (PSD) of connection $i$ using a channel $m$, $G^{ASE}$ is the PSD of the amplified spontaneous emission (ASE) noise, $G^{NLI,s}$ is the PSD of the noise from nonlinear impairments (NLI) in a secure network, and $G^J$ is the PSD of the noise from nonlinear impairments (NLI) caused by jamming.
\end{lemma}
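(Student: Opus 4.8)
The plan is to start from the reference physical-layer model of \cite{johannisson2014modeling}, in which the quality of transmission of a channel is governed by a Gaussian-noise (GN) description: nonlinear signal distortion is treated as additive Gaussian noise whose power spectral density simply adds to that of the amplified spontaneous emission. In a secure network, i.e. with no attacker present, the signal-to-noise ratio of the channel carrying connection $i$ along $r_i$ reads $SNR_m = G_m / (G^{ASE} + G^{NLI,s})$, where $G^{NLI,s}$ is the full nonlinear-interference PSD obtained by integrating the GN kernel against the launch-power distribution of all active channels on the route. My first step would be to write this baseline expression explicitly and to recall that, in the GN model, $G^{NLI,s}$ is a functional of the per-channel powers $P_m$ across the WDM comb.

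Next I would incorporate the jamming term fixed by the Proposition: the attacker replaces the launch power $P$ on each channel $m \in M_J$ by $P_J = P + \epsilon$, leaving the remaining channels untouched. Substituting this perturbed power distribution into the GN functional yields a new total nonlinear-interference PSD, call it $G^{NLI,\mathrm{tot}}$. Since the functional depends continuously on the per-channel powers, I would then \emph{define} the jamming contribution as the excess it induces, $G^{J} := G^{NLI,\mathrm{tot}} - G^{NLI,s}$. With this bookkeeping the denominator of the perturbed SNR splits as $G^{ASE} + G^{NLI,s} + G^{J}$, which is exactly the claimed Equation~(\ref{eq:snr}); the numerator $G_m$, being the PSD of the legitimate signal carrying connection $i$, is unchanged for the victim channel.

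The main obstacle is that the GN kernel is cubic in the channel powers, so the decomposition is not a clean superposition: the excess $G^{J}$ contains not only the pure jammer self- and cross-channel terms but also mixed products coupling $\epsilon$ to the legitimate powers $P$. I would therefore argue that the split is well defined precisely because $G^{J}$ is taken to be the \emph{entire} increment, absorbing all such cross terms; the lemma does not assert that $G^{J}$ equals the interference the jammer would generate in isolation. To make the statement quantitatively useful, as a final step I would expand $G^{NLI,\mathrm{tot}}$ to leading order in the injected power to exhibit $G^{J}$ as a sum of self-channel and cross-channel terms, thereby confirming $G^{J} \ge 0$ and that it grows with $\epsilon$, consistent with the degradation of $SNR_m$ reported in the evaluation.
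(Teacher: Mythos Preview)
Your approach is essentially the paper's: start from the GN-model SNR $G_m/(G^{ASE}+G^{NLI})$, replace the launch power on the jammed channels by $P+\epsilon$, and define $G^{J}$ as the resulting increment in the NLI PSD so that the denominator splits as $G^{ASE}+G^{NLI,s}+G^{J}$. The only difference is that the paper carries the computation out explicitly using the closed-form cross-channel term $\phi\,G_m\,G_{m'}^2\ln\bigl((f_{m,m'}+\Delta f_{m'}/2)/(f_{m,m'}-\Delta f_{m'}/2)\bigr)$: since $G_{m'}^2 = G_m^2 + (\epsilon^2+2\epsilon P)/\Delta_m^2$ for a jammed channel, the split is \emph{exact}, and $G^{J}$ comes out in closed form without the leading-order expansion you propose as a final step.
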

\begin{proof}

We consider the physical layer model in \cite{johannisson2014modeling}, which provides an explicit expression of the signal-to-noise ratio (SNR) considering the characteristics of the non-linear interference (NLI).  The SNR associated to a connection $i$ using a route $r_i$ can be expressed as follows:
\begin{equation}
    SNR_m = \frac{G_m}{G^{ASE} + G^{NLI}}
\end{equation}
Where  $G^{NLI}$ is the PSD of the noise from nonlinear impairments (NLI).\\
The PSD of connection $i$ using a channel $m$ is written as:
\begin{equation}\label{PSD}
    G_m = \frac{P_m}{\Delta_m}
\end{equation}
Where $P_m$ is the power corresponding to $G_m$, and $\Delta_m$ denotes the bandwidth and considered constant for all links in the connection.\\
The PSD of ASE can be expressed as follows:
\begin{equation}
    G^{ASE}=\sum_{l\in r_i}N_l G_{0}^{ASE},\;\;\;\; G_{0}^{ASE}=(e^{\alpha L} - 1)Fh\nu
\end{equation}
Where $N_l$ is the number of spans on link $l$, $L$ is the length of each span, $\alpha$ is the power attenuation, $F$ is the spontaneous emission factor, $h$ is Planck’s constant, and $\nu$ is the light frequency.\\
The PSD of the noise from NLI is:
\begin{equation}
    G^{NLI}=\sum_{l\in r_i}N_l G_{l}^{NLI}
\end{equation}
and using eq. (16) in \cite{johannisson2014modeling} and eq.(1) in \cite{yan2015link},
\begin{multline}
    G_{l}^{NLI}(f_m) = \phi G_m [G_m^2 \text{arcsinh}(\rho (\Delta f_m)^2) \\ + \sum_{\substack{m' = 1 \\ m' \neq m}}^{M}G_{m'}^2 \text{ln}(\frac{f_{m,m'} + \Delta f_{m'}/2}{f_{m,m'} - \Delta f_{m'}/2}) ]
\end{multline}
and,
\begin{equation}
    \phi = \frac{3\gamma^2}{2 \pi \alpha |\beta_2|}, \;\;\;\;\; \rho = \frac{\pi^2 |\beta_2|}{2\alpha}
\end{equation}
Where $f_m$ is the center frequency of channel $m$.  $m'$ is another connection using link $l$, $\Delta f_{m'}$ is the bandwidths for connection $m'$, $f_{m,m'}=|f_m - f_{m'}|$ is the center frequency spacing between connections $m$ and $m'$, $\gamma$ is the fiber nonlinearity coefficient, and $\beta_2$ is the fiber dispersion.\\

 Using  eq. (\ref{PSD}), the squared PSD of a jammed channel $m'$ can be expressed in terms of a non jammed channel $m$ as: 
\begin{equation}
    G_{m'}^2 = \frac{P_J^2}{\Delta_{m'}^2}=\frac{(P+\epsilon)^2}{\Delta_{m}^2} = G_m^2 + \frac{\epsilon^2 + 2\epsilon P}{\Delta_m^2}
\end{equation}
Thus, 
\begin{equation}
    G_{l}^{NLI}(f_m) = G_{l}^{NLI, s}(f_m)+ G_{l}^{J}(f_m)
\end{equation}
   such that,
    \begin{multline}
    G_{l}^{NLI, s}(f_m) = \phi G_m^3 [\text{arcsinh}(\rho (\Delta f_m)^2) \\ +  \sum_{\substack{m' = 1 \\ m' \neq m}}^{M}  \text{ln}(\frac{f_{m,m'} + \Delta f_{m'}/2}{f_{m,m'} - \Delta f_{m'}/2}) ]
\end{multline}
and,
\begin{multline}
    G_{l}^{J}(f_m) = \phi G_m  \frac{\epsilon^2 + 2\epsilon P}{\Delta_m^2} \sum_{\substack{m' \in M_J \\ m' \neq m}}\text{ln}(\frac{f_{m,m'} + \Delta f_{m'}/2}{f_{m,m'} - \Delta f_{m'}/2}) 
\end{multline}
Where $ G_{l}^{NLI, s}(f_m)$ is the PSD of the noise from NLI in a secure network in link $l$ (used in \cite{fontinele2017efficient} and \cite{zhao2015nonlinear}), and  $ G_{l}^{J}(f_m)$ is the PSD of the noise from NLI caused by jamming in link $l$. Thus, we define $G^{NLI,s}=\sum_{l\in r_i}N_l G_{l}^{NLI,s}$ the PSD of the noise from nonlinear impairments (NLI) in a secure network, and $G^J=\sum_{l\in r_i}N_l G_{l}^{J}$ the PSD of the noise from nonlinear impairments (NLI) caused by jamming. \\
Therefore, the jamming attack aware SNR equation associated to a connection $i$ using a route $r_i$ can be expressed as eq. (\ref{eq:snr}).

%\begin{equation}
%    SNR_m = \frac{G_m}{G^{ASE} + G^{NLI,s} + G^{J}}
%\end{equation}
 
\end{proof}

% how the SNR equation is used to accept or reject connections
In the network operational phase, the SNR value is verified by the control plane, as a way of maintaining the QoT of connections according to the standard provided by the Service Level Agreement (SLA). Therefore, to have an acceptable QoT, circuits should maintain the SNR value above the SNR threshold ($\text{SNR}_{th}$) allowed by the network. Thus, circuits with a SNR lower than the $\text{SNR}_{th}$ are not established. Moreover, circuits that will decrease  the SNR of the already established circuits to have a value lower than the  $\text{SNR}_{th}$  are blocked as well.

In the following Section,  we present the performance evaluation of the proposed model using link and network simulations.

\section{Performance Evaluation} \label{sec:evaluation}

%Scenario Definitions
%\hl{define the scenario here, all the parameters etc..} 
In this section, we analyze the impact of jamming attacks on online network scenarios. We start our analysis using a basic network topology with two nodes and one bidirectional link between them. It is important to figure out the effect of jamming in one link to understand the nonlinearity patterns that may occur due to such high power attack. During the simulation, we choose to launch the jamming in a fixed slot range. The reason of fixing the position of the attacker is to be able to study the impact of in-band jamming on the chosen slots, and out-of-band jamming along the neighboring slots by looking to slots utilization. Afterward, we consider a more general network scenario including 6 nodes and 8 bidirectional links. 
In this topology, the occurrence of jamming is considered in only one link during the whole simulation, and similar to the link scenario, the jamming slot range is fixed. The effect of jamming can be propagated to other links through the circuits that cross the jammed slots in the attacked link. Figure \ref{fig:topologies} shows the topologies used in the simulations,  and the distance of each link in km.

\begin{figure}[!h]
    \begin{center}
        \includegraphics[width=0.5\textwidth, page=1]{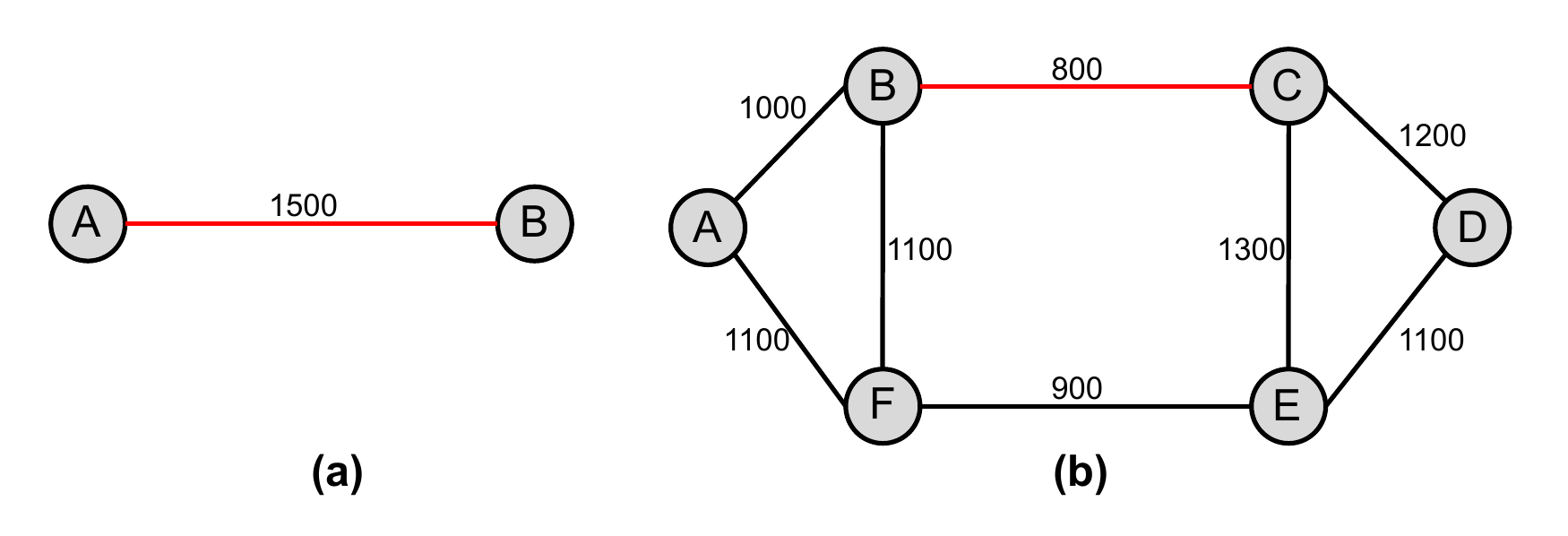}
        \caption{(a) Single-link and (b) evaluated topology. Distances are shown in km and jammed links in red.}%\vspace{-2em}
        \label{fig:topologies}
    \end{center}
\end{figure}

Simulations are made with ONS simulator \cite{ONS}. In both scenarios, we consider links with 320 slots for each direction. We generate 100,000 circuit requests, which follow a Poisson distribution with mean holding time of 600 seconds, following a negative exponential distribution and uniformly-distributed among all nodes-pairs. We use 3 different values of bandwidth which are 50, 100 and 200 Gbps, uniformly selected. To solve the routing and spectrum assignment  (RSA) problem, we adopt the basic Dijkstra algorithm for routing, and the \textit{First Fit}  policy for  slot assignment. The guardband between two adjacent lightpaths is assumed to be 2 slots. The circuits use 16QAM modulation, with $\text{SNR}_{th}$ equal to 15 \cite{fontinele2017efficient}. The parameters for the SNR calculation are shown in table \ref{tab:physical} \cite{fontinele2017efficient}:

\begin{table}[!h]
\centering
\caption{Parameters for SNR calculation.}
\label{tab:physical}
\begin{tabular}{|c|c|}
\hline
\textbf{Variable}   & \textbf{Value}                \\ \hline
$P_{TX}$            & 0 dB                          \\ \hline
$\Delta f$          & 12.5 GHz                      \\ \hline
$\alpha$            & 0.2 dB/km                     \\ \hline
L                   & 100                           \\ \hline
$\gamma$            & $1.22 Wkm^{-1}$               \\ \hline
$\beta_2$           & $16 ps^2/km$                  \\ \hline
$v$                 & $1.93 \times 10^{14} Hz$      \\ \hline
$F$                 & $6\;dB$                        \\ \hline
\end{tabular}
\end{table}

% jamming parameters
The jamming is fixed in a 10-slots range, with indexes from 140 to 149. Thus, all the generated circuits that are using those slots will be affected by the jamming attack. We vary the jamming power from 0 to 5 dB, in a step of 0.5 dB.

 In both scenarios (one-link and full network), there is a higher blocking probability for a jamming power close to 1.75 dB, reaching a value of blocking 20.1\% higher than the scenario without jamming for the full network. This is because with this jamming power, it is still possible to establish multiple jammed circuits, which generate greater impact on the QoT of the circuits that will still be established on the network.

As the power of the inserted jamming increases, the interference provided by the jammed circuits in the neighboring slots also increases. According to the proposed model, the interference is more intense between spectrally nearby slots. When evaluating the SNR between one established jammed circuit and a new jammed circuit with great spectral proximity, the control plane does not approve the establishment of the new jammed circuit, as the high interference violates the $\text{SNR}_{th}$. Therefore, less circuits under jamming interference are established close to each other, which causes less interference in the network as a whole and consequently there is a reduction in the blocking probability.

\subsection{Jamming Effect on a Single Link}

For evaluating the single link scenario, i.e two nodes and one link with multiple spans (Fig.~\ref{fig:topologies}(a)), we adopt the proposed attack model   shown in equation (\ref{eq:snr}). In our simulation, we consider 10 replications and we adopt the average value of results, considering a tolerable standard deviation.  The load is set to 120 Erlang. The average slots utilization is measured during the simulation as well as the blocking probability. To calculate the average slots utilization, the status of all slots is checked when a new request arrives in the network. The occupied slots (by circuits or by guard band) are identified, and their respective counter is increased by one. At the end, the utilization percentage is verified considering the total number of circuit requests generated in the simulation.

 We analyze mainly two metrics, which are the blocking probability and average slots utilization. The blocking probability indicates the relation between the number of blocked circuits in the network and the number of circuits generated, and it shown in figure \ref{fig:fullNetworkBlocking} for the two studied scenarios (one-link and full network). Figure \ref{fig:oneLink} illustrates the slot utilization rate for the single link scenario under different jamming power values. We zoom-in on the critical slots affected by jamming to see clearly the degree of harmness in term of utilization percentage. The range of jammed slots is colored with red for the ease of reading the figure.  

\begin{figure*}
 \centering
   \includegraphics[scale=0.35]{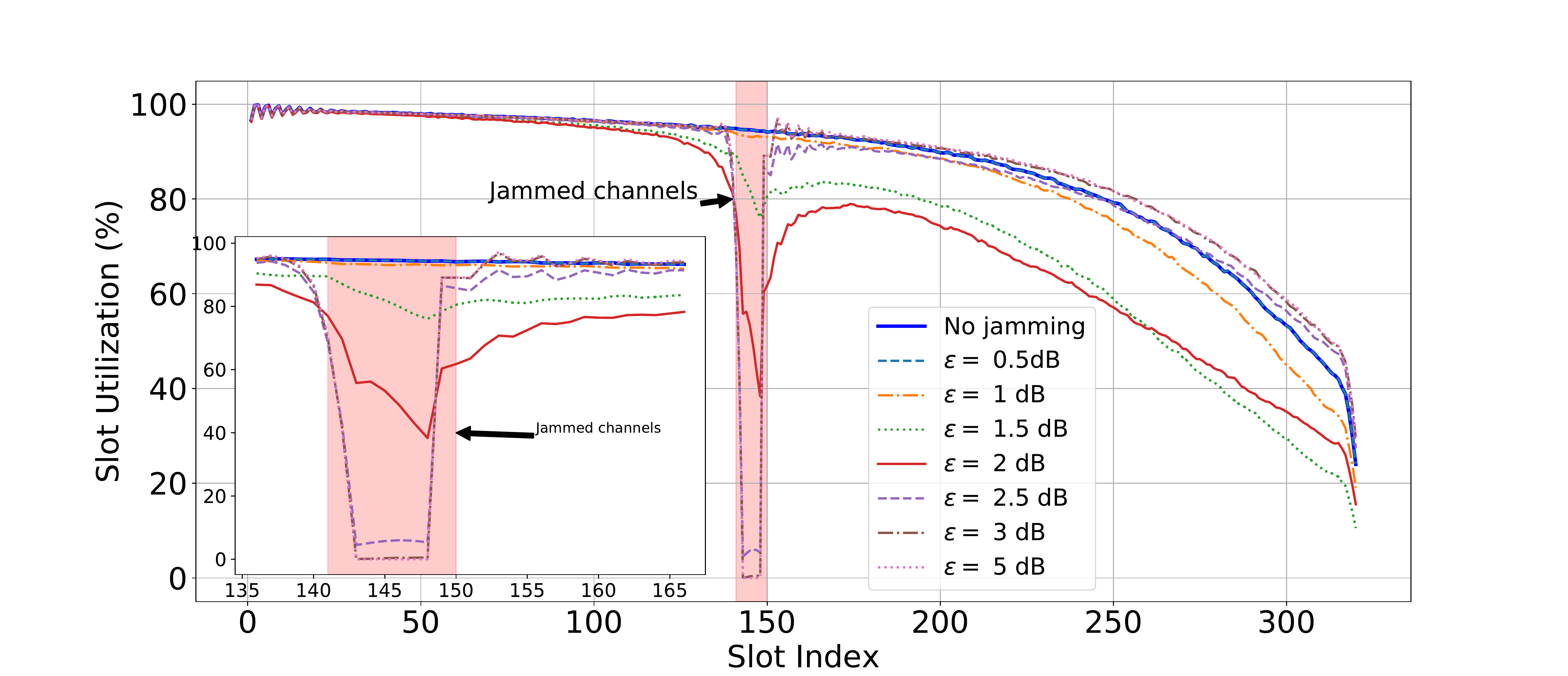}
    %\vspace{-0 cm}
 \caption{Jamming effect in a single link.}
\label{fig:oneLink}
\vspace{-0.2 cm}
\end{figure*}

The decreasing curve for the slot utilization in the scenario without the jamming attack is justified by the application of the First Fit policy, in which the allocation is preferably made in the lowest index slots. With the application of low power jamming (0.5 and 1.0 dB), the impact observed on the network remains indistinguishable, and the curve of slot utilization in Fig.~\ref{fig:oneLink} remains with behavior similar to the scenario without the jamming attack. However, the blocking probability increases exponentially in this interval of additional jamming values from 4\% to 5.8\% (see figure \ref{fig:fullNetworkBlocking}). From $1$ to $1.75$ dB jamming, slots utilization in the jamming interval and in neighboring slots, especially slots with higher indexes, start to decrease slowly, while the blocking probability increases considerably from 6\% to 19\%, to reach the max. value of blocking 19\% when  the additional jamming power is around 1.75 dB. (fig. \ref{fig:fullNetworkBlocking})

From the occurrence of 1.5 dB jamming, it is clear to see the impact of in-band jamming on the utilization of jammed slots (140-149) which is sharply reduced, until it reached the null value when the additional jamming power is higher than 3 dB.  Attacks with 1.5 dB and 2.0 dB jamming also have a huge impact on the slot utilization outside the jammed area, characterizing the out-of-band jamming. This is because the established jammed circuits have a low SNR value, near the $\text{SNR}_{th}$, and prevent the formation of new circuits, as these would take the SNR value of the circuits already established to values below the acceptable $\text{SNR}_{th}$ in the network. This impact is more intense on the slots near the jammed slot range. 

After 2.5 dB, jamming starts to affect slots in the attacked range more intensely. When a very intense jamming occurs, two jammed circuits cannot be allocated with great spectral proximity, as there is a greater impact on the QoT and consequently, on the SNR value. Thus, jamming values above 2.5 dB in the evaluated scenario turns unusable the jammed slot range, after the establishment of the first jammed circuit.

From $1.75$ to $3$ dB jamming,  the blocking probability decreases rapidly from 19\% to 4\%, which can be explained by the fact that the control plane tends to avoid the assignment of circuits in the jammed range due to their low SNR value. Thus, the jamming effect is rarely seen in the network and the blocking probability becomes lower when the utilization rate is close to zero.
\subsection{Jamming Effect on a Network}

In the full network scenario (Fig.~\ref{fig:topologies}(b)), 5 replications are simulated, with a load of 400 Erlang. The measured standard deviation values  are around 0.02 \%, and the average error value is around 0.01\% for the blocking probability. The jamming attack is applied in the link between nodes B and C, in a fixed slot range (140-149). \\Figure~\ref{fig:fullNetworkUtilization} shows the effects of jamming attack on slots utilization rate in the network, which represents the average utilization for all links in the network. We observe a less intense reduction in slots utilization compared to the link case. However, the reduction is still high in the range of jammed slots; for a jamming of 4 dB, the utilization of slots 140-149 is between 25\% and 30\%. When $\epsilon$ is higher than 3 dB, we remark a small oscillation of the utilization curve in the neighboring slots, then curves converge rapidly to the value of no jamming scenario. Thus, in a network, the out-of-band jamming effect is not visible for high jamming power when we monitor the average slots utilization. \\
Figure \ref{fig:fullNetworkBlocking} shows that the blocking probability curve for the network scenario has a similar monotonic behavior as the  single link scenario, but with lower blocking intensity. The maximum blocking probability is reached at a jamming power around 1,75 dB, but with an increase of only 2\% (from $\approx 7$\% to $\approx 9$\%).

\begin{figure*}
 \centering
   \includegraphics[scale=0.35]{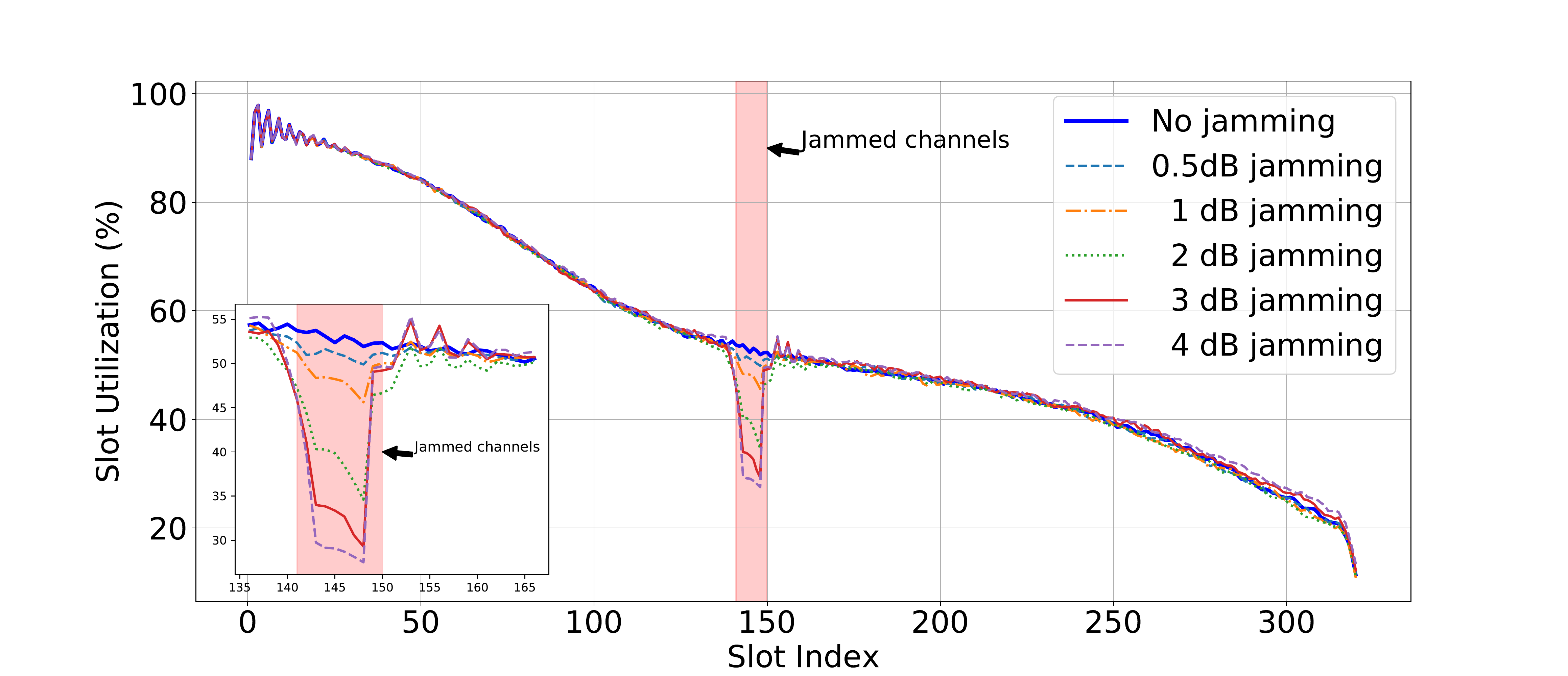}
    %\vspace{-0 cm}
 \caption{Jamming effect in a network.}
\label{fig:fullNetworkUtilization}
\vspace{-0.2 cm}
\end{figure*}

\begin{figure}
 \centering
   \includegraphics[scale=0.3]{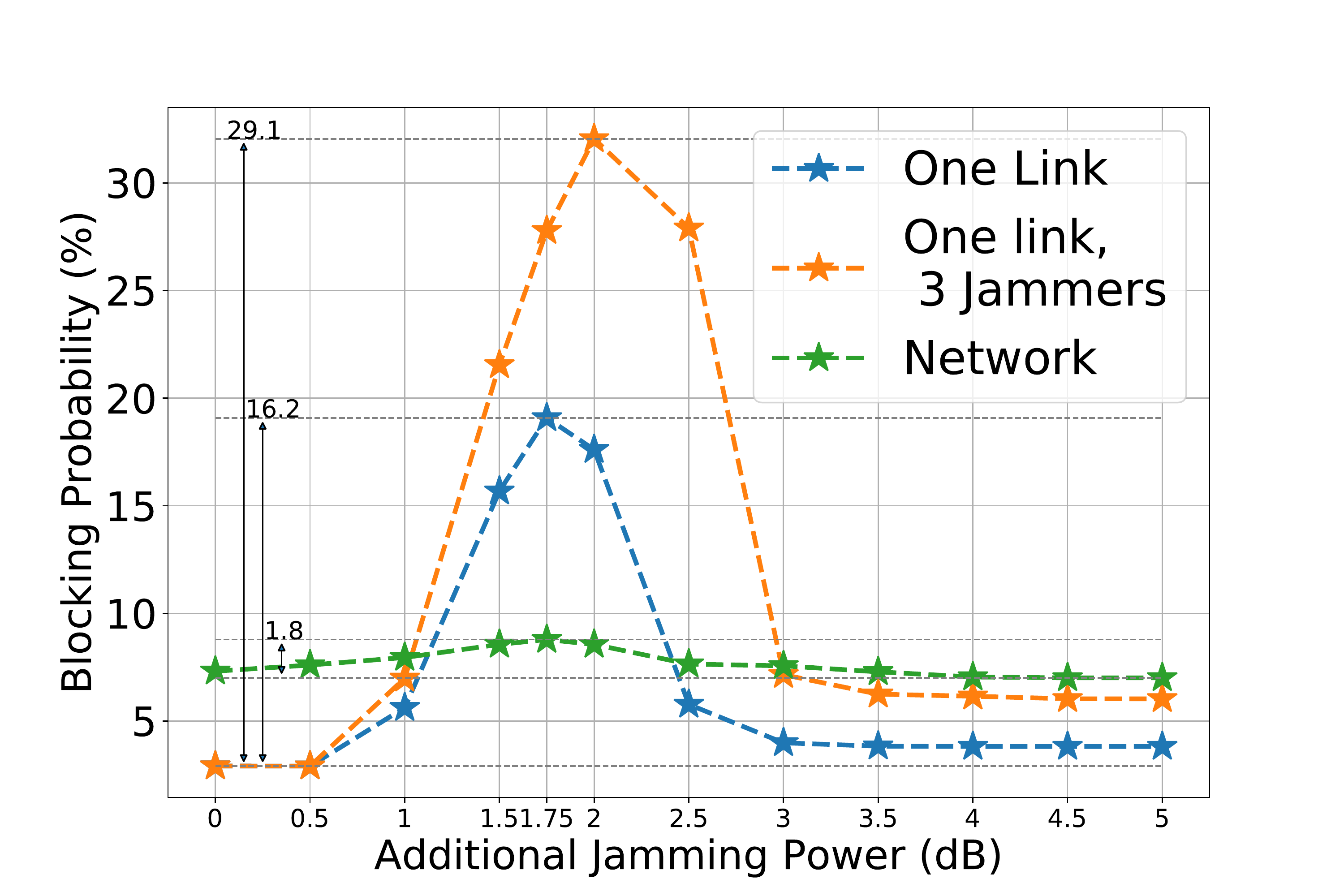}
    %\vspace{-0 cm}
 \caption{Blocking probability.}
\label{fig:fullNetworkBlocking}
\vspace{-0.2 cm}
\end{figure}

\subsection{Impact of Multiple Jammers}
Jamming attacks can be realized by more than one jammer to increase service disruption. As a result, it is interesting to consider a scenario in which the network is under several attacks. For the simplicity of the simulation, we adopt a network with a single link, we set the number of jammers to 3, where each jammer is using 10 consecutive slots: 50-59, 140-149, and 230-239.

\begin{figure*}
 \centering
   \includegraphics[scale=0.35]{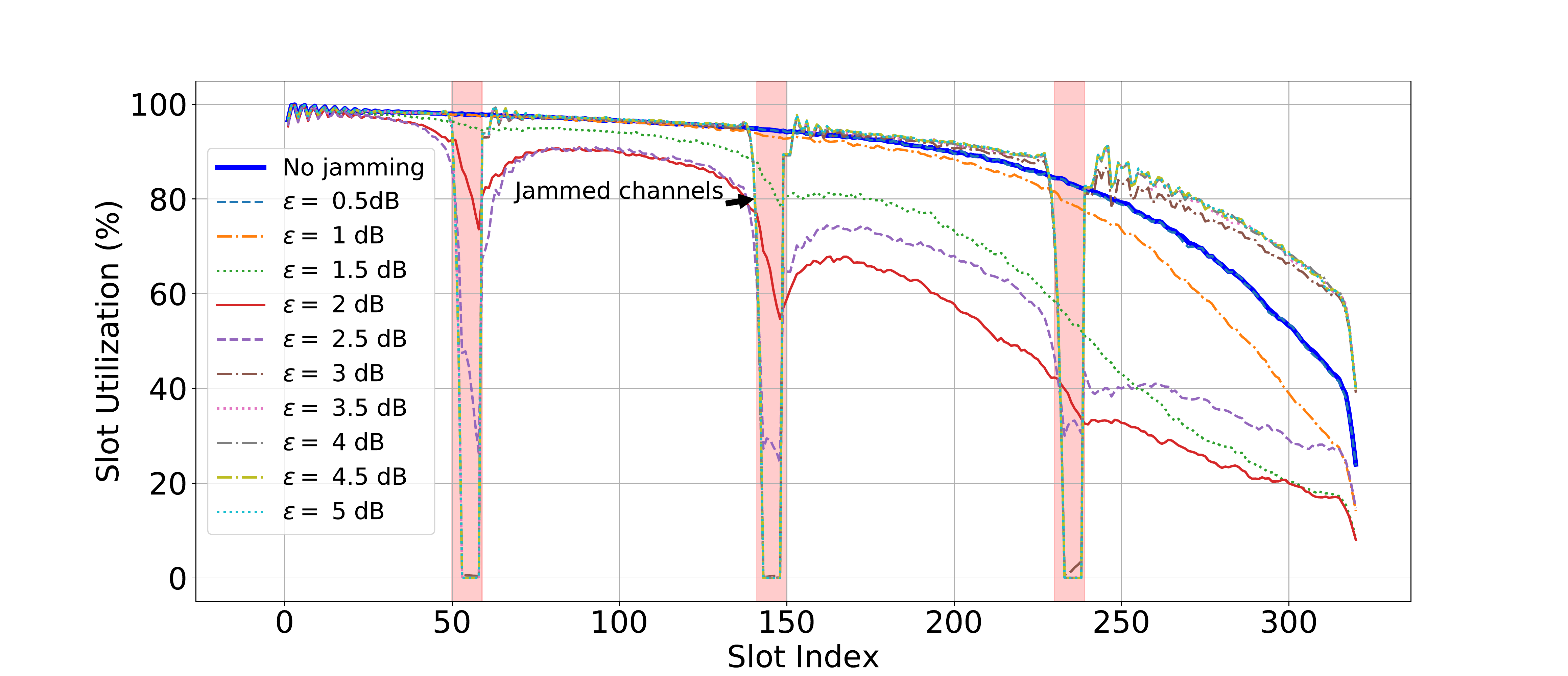}
    %\vspace{-0 cm}
 \caption{Impact of 3 jammers in a single link.}
\label{fig:onelinkmultiplejammers}
\vspace{-0.2 cm}
\end{figure*}

Figure \ref{fig:onelinkmultiplejammers} shows the impact of multi-jammers attack on slots utilization, regarding different values of jamming power, $\epsilon=0,0.5,..,5$ dB. When the jamming power is less than 1 dB, the effect on slots utilization is not clearly visible, while the blocking increase from 3\% to 7\%. When $\epsilon$ reaches the value 2 dB, we see the highest out-of-band effect where slots utilization  is affected with different intensities along all the link, and the blocking probability has its highest value around 32\%. For this value of jamming, the effect of in-band jamming is close to the effect of out-of-band jamming in terms of slots utilization, which means that this point presents the best value for an attacker to disrupt neighboring channels. In the previous parts, we saw that the highest blocking and highest out-of-band effect are obtained for a jamming power around 1.75 dB, and in this case when we added other jammers in different position, this \textit{best point} for jammer has changed. This result opens an interesting direction of research to study the impact of jamming position on blocking probability and slots utilization, and a theoretical analysis could be a good start for a future work.

\section{Conclusions} \label{sec:conclusion}

In this paper, we analyzed jamming power attacks in elastic optical networks, by embedding the insertion of jamming power into the physical layer model. We studied the impact of in-band and out-of-band jamming on the blocking probability and slots utilization into two different network topologies, considering different values of jamming power. Furthermore, we discussed the case of multiple attackers, where each one is using different frequency slots range. For in-band jamming, the results showed that slots utilization decreases with the increase of jamming power, and becomes null when the jamming power is higher than 3 dB, while for out-of-band jamming, the impact is maximal for a specific jamming power, 1.75 dB in our simulation. When we considered multiple positions of attackers, we obtained the highest blocking probability 32\% for a specific jamming power 2 dB, as a result, we conclude that the impact of jamming depends on attacker positions as well as the jamming power. As a future work, we plan to study the relationship between jamming power, attackers position, blocking probability and slots utilization, to provide a clear understanding of jamming attacks for detection and prevention studies.

\bibliographystyle{./IEEEtran}
\bibliography{mybib}

\end{document}